\documentclass[graybox]{SNmult}

\usepackage{type1cm}        
\usepackage{makeidx}         
\usepackage{graphicx}        
\usepackage{multicol}        
\usepackage[bottom]{footmisc}

\usepackage{newtxtext}       %
\usepackage[varvw]{newtxmath}       

\makeindex             

\newtheorem{prop}{Proposition}
\newtheorem{fact}{Fact}
\theoremstyle{definition}
\newtheorem*{rem}{Remark}
\newtheorem*{rems}{Remarks}
\usepackage{bm}
\usepackage{cite}

\begin{document}
\title*{What aggregation rules can be classified as logical concepts?}
\author{Nikolay L. Polyakov\orcidID{0000-0003-3124-8706} 
}
\institute{Nikolay L. Polyakov \at HSE University, 11 Pokrovsky Boulevard, Moscow \email{npolyakov@hse.ru}
}
%
%
\maketitle
\abstract*{In this paper, we study aggregation rules with nontrivial symmetric classes of invariant sets (restricted domains), assuming that they, unlike others, have a logical nature. In the simplest case, we provide a complete classification of such rules. Our primary tools are methods of universal algebra and the theory of closed classes of discrete functions.
	\keywords{Social Choice Theory $\cdot$ Logic of Social Choice $\cdot$ Impossibility Theorem $\cdot$ Possibility Theorem $\cdot$ Aggregation Rule $\cdot$ Locality $\cdot$ Neutrality $\cdot$ Restricted Domain $\cdot$ $k$-valued Logic $\cdot$ Discrete Function $\cdot$ Closed Class $\cdot$ Clone}
}

\abstract{In this paper, we study aggregation rules with nontrivial symmetric classes of invariant sets (restricted domains), assuming that they, unlike others, have a logical nature. In the simplest case, we provide a complete classification of such rules. Our primary tools are methods of universal algebra and the theory of closed classes of discrete functions.
\keywords{Social Choice Theory $\cdot$ Logic of Social Choice $\cdot$ Impossibility Theorem $\cdot$ Possibility Theorem $\cdot$ Aggregation Rule $\cdot$ Locality $\cdot$ Neutrality $\cdot$ Restricted Domain $\cdot$ $k$-valued Logic $\cdot$ Discrete Function $\cdot$ Closed Class $\cdot$ Clone}
}

\section{Introduction} 

In this paper, we study aggregation rules with symmetric (closed under isomorphisms) classes of invariant domains. The primary motivation for this research is the belief that these rules stand out from the rest due to their \textit{logical} nature. Although the division between the logical and the non-logical is rather arbitrary, Alfred Tarski's idea can serve as a guiding thread: ``...a notion (individual, set, function, etc.) based on a fundamental universe of discourse is said to be logical if and only if it is carried onto itself by each one-one function whose domain and range both coincide with the entire universe of discourse'', \cite{Tarski}. A detailed discussion and development of this approach can be found in \cite{Sher1,Sher2}; see also \cite{Dragalina}. Extending this idea to the field of social choice theory, we will consider classes $\mathscr D$ of sets of individual preferences (or choices of variants) and aggregation rules $f$ that satisfy the following conditions: (a) the class $\mathscr D$ is closed under natural isomorphisms, and (b) every set $D\in \mathscr D$ is closed under the rule $f$. It can be noted that such a situation allows for a logical formalism in which the propositions are individual preferences, and $f$ plays the role of an inference rule. Similar constructions are discussed, for example, in \cite{Endriss,Porello}. Given the very general nature of the problem, we will be forced to make additional efforts to eliminate trivial cases. To some extent, this corresponds to eliminating empty and inconsistent logics, which, respectively, derive no propositions or derive all propositions at once. Our main tool will be the universal algebraic approach proposed by Shelah in \cite{Shelah}, which we believe has great promise. In developing this approach, we utilize the fundamental results of the theory of discrete functions, also known as functions of $k$-valued logic. Various connections between these scientific fields have been repeatedly noted in the literature; see, for example, \cite{Murakami} or \cite{Ovchinnikov}.

\section{General design}
First, we describe a general model of individual choice of options and collective decision-making. Typically (see, e.g., \cite{Aleskerov1,Aleskerov2,Aleskerov3,Aleskerov 2024} or \cite{Hand}), individual choice is formalized using a \textit{choice function} $c$ defined on the set $\mathscr P(A)$ or some natural subset thereof, where $A$ is a set of options (alternatives), $\mathscr P(A)$ is the set of its subsets, and the choice function $c$ is defined by the condition $c(X)\subseteq X$ (or $c(X)\in X$ if the study is restricted to the case of a single choice, as, e.g., in \cite{Shelah}). Models of \textit{individual preferences} can generally be reformulated in the language of choice functions. However, if we focus on aggregation rules, we can consider more general situations in which the domain of the choice function $c$ is a set of alternatives together with some additional structure. For example, an agent may choose an option taking into account both their own preferences and the results of previous votes (related models were discussed in \cite{Polyakov_Shamolin_nonlocal}) or some other additional information about the options. Within this framework, individual choice can be formalized using a function $c$ defined on graphs with vertices in $A$, first-order models with domains $X\subseteq A$, topologies on the set $A$ or its subsets, multisubsets or fuzzy subsets of $A$, sequences of elements of $A$, etc. In the most general case, it suffices for us to assume that the model of individual choice (if we wish to view it as a logical object) is closed under isomorphisms generated by permutations of the set of alternatives. This leads to the following definition.

\begin{definition}
	A \textit{model of individual choice} is a quadruple $(A, B, C, \ast)$, where $A$ and $B$ are nonempty sets (of \textit{options} and \textit{conditions}\footnote{Or \textit{choice contexts}, using the terminology of \cite{Aleskerov 2024}}, respectively), $\ast$ is an embedding of the permutation group of $A$ into the permutation group of $B$, and (the set of \textit{choice functions}) $C$ is a nonempty set of functions from $B$ to $\mathscr P(A)$ (or to $A$ in the case of single choice), such that for any permutation $\sigma$ of $A$ and any function $c\in C$, the set $C$ contains the function $c_\sigma$ defined by: for all $b\in B$
	$$
	c_\sigma (b)=\sigma^{-1}(  c(\sigma^\ast(b))).
	$$
	Here we write $\sigma^\ast$ instead of $\ast(\sigma)$ and (for the case of multiple choice) set $\sigma^{-1}(Y)={\sigma^{-1}(y): y\in Y}$ for all $Y\subseteq A$.
\end{definition}

The process of collective decision-making is formalized by an \textit{aggregation rule}. We will understand an aggregation rule as any function $f: C^n\to C$ satisfying the unanimity condition:

$$
f(  c,   c, \ldots,   c)=  c
$$
(no matter how general the approach to collective decision-making procedures, it is hard to find any argument against the unanimity condition).

\par The structures obtained in this way can be considered as logical concepts; however, a number of known results suggest that under certain natural assumptions, we have no nontrivial examples. The famous Arrow theorem \cite{Arrow} implies that for $3\leq |A|<\omega$, the set $C$ of \textit{rational} choice functions admits only dictatorial rules within the class of local aggregation rules. From the works \cite{Shelah, Polyakov1} it follows that this "impossibility principle" extends quite widely. Namely, except for a few exotic cases, for $3\leq |A|<\omega$ the model of local aggregation of single-choice functions on the set of $r$-element subsets of $A$ leads to exactly two situations: either the rule $f$ is dictatorial, or the set $C$ consists of all admissible choice functions (here one can see an analogy with two degenerate cases in logic: the empty and the inconsistent theory).

\par The above observations lead us to the following concept. For lack of a better alternative, we will assume that the model of individual choice $(A, B, C, \ast)$ is maximally broad (and therefore "trivial"), i.e., it contains, among others, all "absurd" (unacceptable for various reasons) choice functions that can be obtained via aggregation, and we will focus on aggregation rules $f: C^n\to C$ that \textit{preserve} some natural set of "reasonable" choice functions. Many positive results (possibility theorems) are known in this area. A description of the class of invariants (\textit{restricted domains}) for the majority rule can be found in the works of Sen \cite{Sen}, Kaneko \cite{Kaneko}, and others. The specific class of single-peaked domains (see \cite{Mulin}) has been studied in detail in \cite{Karpov1, Karpov2, Karpov3}. Restricted (invariant) domains for the \textit{nonlocal} aggregation procedure according to the level $1$ consensus rule can be derived from \cite{level 1}, etc.

\par With this approach, in what case can an aggregation rule $f: C^n\to C$ be called \textit{logical}? We believe that the minimal condition for this is the existence of a class $\mathscr D$ of nontrivial invariant sets $D\subseteq C$ for $f$ that is closed under permutations of the set of options, i.e., together with each set $D$ it also contains the set $\{d_\sigma: d\in D\}$ for every permutation $\sigma$ of $A$. Intuitively, this means that a logical aggregation rule effectively "works" on some nontrivial restricted domains $D$, and this fact is due only to the general structure of the domain $D$ (without hidden "privileges" for any specific options). It is clear that the condition of \textit{nontriviality} of a set $D\subseteq C$ cannot be precisely formalized in the general case. However, the classes $D$ consisting of all choice functions $c\in C$ that coincide on some set $X\subseteq B$ should certainly be considered trivial, since they are preserved by every aggregation rule $f: C^n\to C$ (in particular, all singletons $\{c\}$, $c\in C$, and the set $C$ itself are trivial). We will see below that in some simple cases this is sufficient to obtain a characterization of logical aggregation rules. All of the above leads us to the following definitions.

\par Let $\mathfrak C=(A,  B, C, \ast)$ be a model of individual choice, and $\mathcal F$ be the set of all possible aggregation rules for $\mathfrak C$, i.e.
$$
\mathcal F=\bigcup\limits_{n<\omega}C^{C^n}.
$$
\begin{definition}
	We say that an aggregation rule $f\in \mathcal F$ \textit{preserves} a set \mbox{$  D\subseteq   C$}, and a set $  D$ is \textit{preserved} by $f$ (or is an \textit{invariant set} for $f$) if 
	$
	f( c_1, c_2, \ldots, c_n)\in  D
	$
	for all $ c_1,  c_2,\ldots,  c_n\in   D$. For any aggregation rule $f$, the set of all sets \mbox{$  D\subseteq   C$} that are invariant for $f$ is denoted by $\mathrm{Inv}\, f$. For any set $\mathcal F\subseteq \mathcal F$  we denote $\mathrm{Inv}\, \mathcal F =\bigcap\limits_{f\in \mathcal F}\mathrm{Inv}\, f$. Symmetrically,~$\mathrm{Pres}\, D$ is the set of all aggregation rules $f$ that preserve a set $D\subseteq C$, and $\mathrm{Pres}\, \mathscr D=\bigcap\limits_{D\in \mathscr D}\mathrm{Pres}\, D$ for any $\mathscr D\subseteq \mathscr P(C)$. 
\end{definition}
The set of all permutations of $A$ is denoted by $S_A$.
\begin{definition}
	A set $\mathscr D\subseteq \mathscr P(C)$ is \textit{symmetric} (with respect to the embedding $\ast$) if for any permutation $\sigma \in S_A$ and set $D\in \mathscr D$ the set $D_\sigma\rightleftharpoons	
	\{d_\sigma: d\in D\}$ belongs to $\mathscr D$. A set $D\subseteq C$ is \textit{symmetric} if the singleton $\{D\}$ is symmetric.
\end{definition}
\begin{definition}
	A set $D\subseteq C$ is \textit{trivial} if  $C$ is empty or $D=\{c\in C: c\restriction_X=d\restriction_X\}$ for some $X\subseteq B$ and $d\in C$. 
	A set $\mathscr D\subseteq \mathscr P(C)$ is \textit{trivial} if it consists only of trivial sets $D\subseteq C$. 
\end{definition}

Thus, we consider the main indicator of the "logicality" of an aggregation rule $f$ to be the existence of a symmetric nontrivial set $\mathscr D\subseteq \mathrm{Inv}\, f$. 

\par To study logical aggregation rules, one can develop a universal-algebraic approach. Indeed, it is easy to verify that the following statement holds.

\par For any set $X$, a set of functions $\mathcal F\subseteq \bigcup_{n<\omega} X^{X^n}$ is called a \textit{clone} on $X$ (see, e.g., \cite{Lau}) if it contains all projections, i.e., the functions defined by the identities
$$
f(x_1, x_2, \ldots, x_n)=x_i
$$
(for some fixed $i$, $1\leq i\leq n$),
and is closed under composition, i.e., together with functions $f:X^n\to X$ and $f_1, f_2, \ldots, f_n: X^m\to X$ it contains the function $f(f_1, f_2, \ldots, f_n): X^m\to X$ defined by the identity
$$
f( f_1, f_2, \ldots, f_n)(\bm x)= f( f_1(\bm x), f_2(\bm x), \ldots, f_n(\bm x)).
$$
A clone $\mathcal F$ on $X$ is called \textit{idempotent} if
$$
f(x, x, \ldots, x)=x
$$
for every function $f\in \mathcal F$ and $x\in X$.

\par For each aggregation rule $f:C^n\to C$ and permutation $\sigma\in S_A$, define the function $f_\sigma: C^n\to C$ by the rule
$$
f_\sigma(c^1, c^2, \ldots, c^n)=(f(c^1_\sigma, c^2_\sigma, \ldots, c^n_\sigma))_{\sigma^{-1}}
$$
for all $c^1, c^2, \ldots, c^n\in C$. A set of aggregation rules is called \textit{symmetric} if, together with each function $f$, it contains all functions $f_\sigma$, $\sigma\in S_A$.

\begin{prop}\label{Prop 1}
	Let $\mathfrak C=(A,  B, C, \ast)$ be a model of individual choice and $\mathcal F$ the set of all possible aggregation rules for $\mathfrak C$. Then for any set $\mathscr D\subseteq \mathscr P(C)$
	\begin{enumerate}
		\item the set $\mathrm{Pres}\,\mathscr D$ is an idempotent clone on $C$,
		\item if $\mathscr D$ is symmetric, then $\mathrm{Pres}\,\mathscr D$ is symmetric.
	\end{enumerate}
\end{prop}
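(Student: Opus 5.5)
The plan is to verify both parts directly from the definitions. The only nontrivial ingredient is that $c\mapsto c_\sigma$ is a (right) action of $S_A$ on $C$. Throughout I use the fact that, by the paper's convention, every element of $\mathcal F$ is an aggregation rule, so it already satisfies unanimity. Idempotency of $\mathrm{Pres}\,\mathscr D$ is therefore automatic once we know that the projections and compositions we produce are again aggregation rules, i.e.\ unanimous maps $C^n\to C$.

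For part 1, I first check the projections. The map $(c_1,\ldots,c_n)\mapsto c_i$ is unanimous, and it trivially preserves every $D\subseteq C$, so it lies in $\mathrm{Pres}\,D$ for every $D$ and hence in $\mathrm{Pres}\,\mathscr D$. This also covers $\mathscr D=\emptyset$, where $\mathrm{Pres}\,\mathscr D=\mathcal F$.

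Next I check composition. Take $f\in\mathrm{Pres}\,\mathscr D$ of arity $n$ and $f_1,\ldots,f_n\in\mathrm{Pres}\,\mathscr D$ of arity $m$, and put $g=f(f_1,\ldots,f_n)$.
\begin{itemize}
\item Unanimity: $g(c,\ldots,c)=f(f_1(c,\ldots,c),\ldots,f_n(c,\ldots,c))=f(c,\ldots,c)=c$.
\item Preservation: for $D\in\mathscr D$ and $\bm c\in D^m$, each $f_i(\bm c)\in D$, so $g(\bm c)\in D$.
\end{itemize}
Thus $\mathrm{Pres}\,\mathscr D$ is a clone, and it is idempotent since all its members are unanimous.

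For part 2, I first establish the action identities
$$
(c_\sigma)_\tau=c_{\sigma\tau},\qquad c_{\mathrm{id}}=c .
$$
To prove the first, expand $(c_\sigma)_\tau(b)=\tau^{-1}\bigl(c_\sigma(\tau^\ast(b))\bigr)=\tau^{-1}\sigma^{-1}\bigl(c(\sigma^\ast\tau^\ast(b))\bigr)$. Then use that $\ast$ is a group homomorphism, so $\sigma^\ast\tau^\ast=(\sigma\tau)^\ast$. For the second, use $\mathrm{id}^\ast=\mathrm{id}$. Consequently $(c_\sigma)_{\sigma^{-1}}=c$, which gives $(D_\sigma)_{\sigma^{-1}}=D$. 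I also note that $c_\sigma\in C$ by the definition of a model of individual choice, so $f_\sigma$ really is a map $C^n\to C$.

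Now let $f\in\mathrm{Pres}\,\mathscr D$ and $\sigma\in S_A$.
\begin{itemize}
\item Unanimity of $f_\sigma$: $f_\sigma(c,\ldots,c)=(f(c_\sigma,\ldots,c_\sigma))_{\sigma^{-1}}=(c_\sigma)_{\sigma^{-1}}=c$.
\item Preservation: fix $D\in\mathscr D$ and $c^1,\ldots,c^n\in D$. Then $c^i_\sigma\in D_\sigma$, and $D_\sigma\in\mathscr D$ by symmetry of $\mathscr D$, so $f(c^1_\sigma,\ldots,c^n_\sigma)=d_\sigma$ for some $d\in D$. Hence $f_\sigma(c^1,\ldots,c^n)=(d_\sigma)_{\sigma^{-1}}=d\in D$.
\end{itemize}
Therefore $f_\sigma\in\mathrm{Pres}\,\mathscr D$.

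The only step requiring care is the action identity: getting the order $\sigma\tau$ versus $\tau\sigma$ right, and invoking the homomorphism property of the embedding $\ast$. Everything else is routine unwinding of the definitions.
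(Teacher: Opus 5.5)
Your proposal is correct and is exactly the routine verification the paper leaves to the reader. Your action identity $(c_\sigma)_\tau=c_{\sigma\tau}$ is right. Reading $\mathcal F$ as consisting only of unanimous maps, as you do, is genuinely needed for idempotency: the literal formula $\bigcup_{n<\omega}C^{C^n}$ would let $\mathrm{Pres}\,\mathscr D$ contain non-unanimous maps, for example when $\mathscr D=\emptyset$.
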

\begin{proof}
	By a routine verification.
\end{proof}
\begin{rem}
	We can also notice (this is not used in this work) that the pair $(\mathrm{Inv}, \mathrm{Pres})$ is an antitone Galois connection between the Boolean lattices $\mathscr P(\mathcal F)$ and $\mathscr P(\mathscr P(B))$, and all Galois-closed sets $\mathcal G\subseteq \mathcal F$ are clones. The connection $(\mathrm{Inv}, \mathrm{Pres})$ is closely related to the classical connection $(\mathrm{Inv}, \mathrm{Pol})$, well known in the theory of closed functional classes (see, e.g., \cite{Lau} or \cite{Polyakov_Shamolin_clon}), but is not reducible to it.
\end{rem}
The conditions of Proposition \ref{Prop 1} impose rather strict restrictions. Below, we assume that, in the simplest case, they allow for a complete description of \textit{local} aggregation rules of a logical nature.

\section{The simplest case: Logicality $+$ Locality $\approx$ Neutrality}
\par In this section we will consider the model of individual choice $\mathfrak C_2(A)=(A, B, C, \ast)$, where $B$ is the set $[A]^2$ of all two-element subsets of $A$, $\ast$ is the natural embedding of the group $S_A$ into the group $S_B$ (i.e., the embedding $\ast$ given by $\sigma^\ast({x, y})=\{\sigma(x), \sigma(y)\}$ for all distinct $x, y\in A$), and $C$ is the set of all choice functions $c:B\to A$ (i.e., functions $c$ satisfying $c(b)\in b$ for every $b\in B$). In a sense, $\mathfrak C_2(A)$ is the simplest model of individual choice. We will provide a convenient description of local logical aggregation rules for $\mathfrak C_2(A)$ in the case $5\leq |A|<\omega$. In this section we will discuss only models $\mathfrak C_2(A)=(A, B, C, \ast)$; the meaning of the symbols $A$, $B$, $C$ is assumed to be fixed throughout the section.
\par We begin with a series of standard definitions, various modifications of which have been known since classical works on social choice theory; see, for example, \cite{Arrow, Arrow1} or \cite{Aleskerov1,Aleskerov2,Aleskerov3}.
\par For any connected asymmetric binary relation $P$ on $A$ the choice function $c_P\in C$ is defined as follows:
$$
c_P(\{a_1, a_2\})=a_1\Leftrightarrow a_1\, P\, a_2
$$
for all distinct $a_1, a_2\in A$. Obviously, the mapping $P\mapsto c_P$ is a bijection from the set of all connected asymmetric relations on $A$ onto $C$. A choice function $c\in C$ is called \textit{rational} if 
$$
c=c_\prec
$$
for some linear order $\prec$ on $A$. A set $D\subset C$ is called \textit{rational} if it consists only of rational functions $c\in C$. A class $\mathscr D\subseteq \mathscr P(C)$ is called \textit{rational} if it consists only of rational sets $D\subseteq C$.

\par  An aggregation rule $f: C^n\to C$ (for $\mathfrak C_2(A)$) is called
\begin{enumerate}
	\item \textit{local} if $$(c_1(b)=d_1(b)\wedge \ldots\wedge c_n(b)=d_n(b))\Rightarrow f(c_1, c_2, \ldots, c_n)(b)=f(d_1, d_2, \ldots, d_n)(b)$$ 
	for all $c_1$, $c_2$,~$\ldots$, $c_n$, $d_1$, $d_2$, $\ldots$, $d_n\in C$ and $b\in B$,
	\item  \textit{neutral} if for each permutation $\sigma\in S_A$ we have: $$f_\sigma=f,$$
	\item \textit{dictatorial} (or a \textit{dictatorship}) if it is a projection, i.e. for some $i$, $1\leq i\leq n$, we have
	$$
	f(c_1, c_2, \ldots, c_n)=c_i
	$$
	for all $c_1$, $c_2$,~$\ldots$, $c_n\in C$.
\end{enumerate}
\par A \textit{decisive coalition} on a set (of agents) $E_n\leftrightharpoons\{1, 2, \ldots, n\}$ is a set $\mathscr K\subseteq \mathscr P(E_n)$ such that for any set $X\subseteq E_n$, exactly one of the sets $X$ and $E_n\setminus X$ belongs to $\mathscr K$. It is well known that every local and neutral aggregation rule $f:C^n\to C$ can be defined using a decisive coalition $\mathscr K_f\subseteq \mathscr P(E_n)$ as follows: for all $c_1$, $c_2$,~$\ldots$, $c_n\in C$ and $b=\{a_1, a_2\}\in B$
$$
f(c_1, c_2, \ldots, c_n)(b)=a_1\Leftrightarrow \{i\in E_n: c_i(b)=a_1\}\in \mathscr K_f.
$$
\par All local and neutral aggregation rules can be classified as logical due to the following observation.
\begin{prop}\label{PropNeutrality}
	Every local and neutral aggregation rule $f:C^n\to C$ for $\mathfrak C_2(A)$ preserves any two-element set $D\subseteq C$.
\end{prop}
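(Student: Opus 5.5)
Let $D=\{d_1,d_2\}\subseteq C$ with $d_1\neq d_2$; the case of a one-element $D$ is immediate from unanimity. I will use the representation of a local and neutral rule $f$ via its decisive coalition $\mathscr K_f$ recalled above. Take arbitrary $c_1,\ldots,c_n\in D$ and put $X=\{i\in E_n: c_i=d_1\}$. Then $c_i=d_2$ for all $i\in E_n\setminus X$. My plan is to show that $f(c_1,\ldots,c_n)$ equals $d_1$ if $X\in\mathscr K_f$ and equals $d_2$ otherwise. The key point is that this case distinction does not depend on the pair $b\in B$.

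Fix $b=\{a_1,a_2\}\in B$ and split into two cases.

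\emph{Case $d_1(b)=d_2(b)$.} Then all $c_i(b)$ are equal to this common value, say $a_1$. The set $\{i: c_i(b)=a_1\}=E_n$ belongs to $\mathscr K_f$, because $\emptyset\notin\mathscr K_f$ and exactly one of $E_n$, $\emptyset$ lies in $\mathscr K_f$. Hence $f(c_1,\ldots,c_n)(b)=a_1$, which coincides with both $d_1(b)$ and $d_2(b)$.

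\emph{Case $d_1(b)\neq d_2(b)$.} Label the elements so that $d_1(b)=a_1$ and $d_2(b)=a_2$. Then $\{i: c_i(b)=a_1\}=X$. By the defining formula, $f(c_1,\ldots,c_n)(b)=a_1=d_1(b)$ if $X\in\mathscr K_f$. If $X\notin\mathscr K_f$, then $E_n\setminus X=\{i: c_i(b)=a_2\}\in\mathscr K_f$, so $f(c_1,\ldots,c_n)(b)=a_2=d_2(b)$.

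Combining the two cases: if $X\in\mathscr K_f$, then $f(c_1,\ldots,c_n)(b)=d_1(b)$ for every $b\in B$, so $f(c_1,\ldots,c_n)=d_1\in D$. Otherwise $f(c_1,\ldots,c_n)(b)=d_2(b)$ for every $b$, so $f(c_1,\ldots,c_n)=d_2\in D$.

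The only step needing care is the representation of $f$ through a single decisive coalition $\mathscr K_f$, uniform over all pairs $b$ and symmetric in the two elements of $b$. The paper cites this as well known. If I had to justify it, locality gives, for each $b$, a coalition rule depending only on the profile restricted to $b$. Neutrality then makes this rule the same for all $b$ and for both orientations of $b$: use a permutation $\sigma$ mapping one pair to another, or the transposition swapping $a_1$ and $a_2$. Unanimity ensures that the coalition $E_n$ wins. Beyond this, the argument is a direct check.
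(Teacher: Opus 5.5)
Your proof is correct, but it takes a different route from the paper's. You go through the decisive-coalition representation of local neutral rules, which turns the claim into bookkeeping. The set $X=\{i: c_i=d_1\}$ is the same for every pair $b$ on which $d_1$ and $d_2$ disagree. So the single test ``$X\in\mathscr K_f$'' decides the outcome on all such pairs at once, and pairs where $d_1,d_2$ agree are settled by $E_n\in\mathscr K_f$.

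The paper never uses $\mathscr K_f$. It fixes one pair $b$ with $c^2(b)\neq c^1(b)=f(d^1,\ldots,d^n)(b)$ for $c^1,c^2\in D$. For any other disagreement pair $b'$ it picks a permutation $\sigma$ with $\sigma(c^1(b))=c^1(b')$ and $\sigma(c^2(b))=c^2(b')$. Since every $d^i\in D$, each $d^i_\sigma$ agrees with $d^i$ on $b$, so locality gives $f(d^1_\sigma,\ldots,d^n_\sigma)(b)=f(d^1,\ldots,d^n)(b)$. Neutrality ($f_\sigma=f$) then carries this value to $b'$. Agreement pairs are handled directly by locality plus unanimity.

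This permutation is exactly the mechanism behind the uniformity of $\mathscr K_f$ across pairs and orientations that you sketch at the end, so both arguments rest on the same idea. The paper's version is self-contained and uses only the instances of neutrality it needs, namely one oriented pair mapped to another. Yours depends on the cited representation lemma, which you only sketch. In return it gives an explicit description of $f$ on $D$: the output is $d_1$ or $d_2$ according as the set of agents reporting $d_1$ is or is not in $\mathscr K_f$.
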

\begin{proof}
	Let $f:C^n\to C$ be a local and neutral aggregation rule for $\mathfrak C_2(A)$, $D\subseteq C$, $|D|=2$, and $(d^1, d^2, \ldots, d^n)\in D^n$. Choose $c^1, c^2\in D$ and $b\in [A]^2$ such that $c^2(b)\neq c^1(b)=f(d^1, d^2, \ldots, d^n)(b)$. It is enough to show that $f(d^1, d^2, \ldots, d^n)(b')=c^1(b')$ for all $b'\in [A]^2$. If $c^1(b')=c^2(b')$, this follows immediately from the conditions of locality and unanimity. Otherwise, consider a permutation $\sigma\in S_A$ for which $\sigma(c^1(b))=c^1(b')$ and $\sigma(c^2(b))=c^2(b')$. Since $f$ is local, we have:
	$
	f(d^1_\sigma, d^2_\sigma, \ldots, d^n_\sigma)(b)= f(d^1, d^2, \ldots, d^n)(b)=c^1(b).
	$.
	Then since $f$ is neutral, we have:
	$$
	f(d^1, d^2, \ldots, d^n)(b')=\sigma(f(d^1_\sigma, d^2_\sigma, \ldots, d^n_\sigma)(b))=\sigma(c^1(b))=c^1(b').
	$$
\end{proof}
We show that the converse is also true for $5\leq |A|<\omega$, and we provide a simple classification of all local and neutral aggregation rules.
\par For each set $A$, we define four special local and neutral aggregation rules $\delta, \nu, \lambda, \mu: C^3\to C$ with decisive coalitions:
\begin{align*}
	\mathscr K_\delta=&\,\,\{\{1\}, \{1,2\}, \{1,3\}, \{1,2,3\}\},\\
	\mathscr K_\nu=&\,\,\{\{2\}, \{3\}, \{2,3\}, \{1,2,3\}\},\\
	\mathscr K_\lambda=&\,\,\{\{1\}, \{2\}, \{3\}, \{1,2,3\}\},\\
	\mathscr K_\mu=&\,\,\{\{1,2\}, \{2, 3\}, \{1,3\}, \{1,2,3\}\},
\end{align*}
respectively.  Obviously, $\delta$ is a dictatorship, and $\mu$ is the majority rule.
\begin{definition}
	We will say that aggregation rules $f:C^n\to C$ and $g: C^m\to C$ (for $\mathfrak C_2(A)$) are \textit{invariantly equivalent} if they preserve the same sets $D\subseteq C$, i.e.
	$$
	\mathrm{Inv}\, f=\mathrm{Inv}\, g.
	$$
\end{definition}
\begin{theorem}\label{Th1}
	Let $5\leq |A|<\omega$. Then for any local aggregation rule $f: C^n\to C$ for $\mathfrak C_2(A)$ the following conditions are equivalent:
	\begin{enumerate}
		\item\label{Th1Item1} $f$ has a non-trivial symmetric rational class of invariant sets,
		\item\label{Th1Item2} $f$ has a non-trivial symmetric class of invariant sets (i.e., $f$ is logical),
		\item\label{Th1Item3} $f$ is neutral,
		\item\label{Th1Item4} $f$ is invariantly equivalent to one of the rules $\delta, \nu, \lambda, \mu$.
	\end{enumerate}
\end{theorem}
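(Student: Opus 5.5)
The plan is to prove the cycle $3\Rightarrow 4$, $4\Rightarrow 1$ (together with $3\Rightarrow 1$), $1\Rightarrow 2$ and $2\Rightarrow 3$. The implication $1\Rightarrow 2$ is immediate.

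For $3\Rightarrow 4$ we work with the decisive coalition $\mathscr K_f$. The key observation is that for local neutral rules the invariant sets depend only on the clone generated by $f$, or more precisely by its decisive coalition. The set of all $\mathscr K_g$, where $g$ ranges over local neutral rules that are term-definable from $f$ by composition with projections, is a clone of self-dual monotone-free Boolean functions: each decisive coalition is a self-dual Boolean function $\varphi_f(x_1,\dots,x_n)$ on $\{0,1\}$, and composition of rules corresponds to composition of these Boolean functions. By Post's lattice, the clones of self-dual idempotent Boolean functions are exactly four:
\begin{itemize}
\item the projections, matching $\delta$;
\item the monotone self-dual functions, generated by majority, matching $\mu$;
\item the linear self-dual functions, generated by $x\oplus y\oplus z$, matching $\lambda$;
\item all idempotent self-dual functions, generated by $x\bar y\vee x\bar z\vee\ldots$, matching $\nu$.
\end{itemize}
Two rules generating the same Boolean clone generate the same clone on $C$. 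Since $\mathrm{Inv}$ of a rule equals $\mathrm{Inv}$ of the clone it generates, this yields invariant equivalence with $\delta$, $\mu$, $\lambda$ or $\nu$.

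For $4\Rightarrow 1$ and $3\Rightarrow 1$, use Proposition~\ref{PropNeutrality}. The class of all two-element rational sets $\{c_{\prec},c_{\prec'}\}$ is symmetric. It is also nontrivial once $|A|\ge 3$: take $\prec'$ differing from $\prec$ on two pairs. Then no set $\{c: c\restriction_X=d\restriction_X\}$ has exactly two elements unless $C$ restricted to the complement of $X$ has two elements, which forces $|B\setminus X|=1$. Each of $\delta,\nu,\lambda,\mu$ is neutral, and invariant equivalence transfers this class.

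The main obstacle is $2\Rightarrow 3$. Let $\mathscr D\subseteq\mathrm{Inv} f$ be symmetric and nontrivial. By Proposition~\ref{Prop 1}, $\mathrm{Pres}\,\mathscr D$ is a symmetric idempotent clone containing $f$. Since $f$ is local, it is given pairwise by Boolean functions $\varphi_b:\{0,1\}^n\to\{0,1\}$, with one for each $b\in B$ and an orientation of $b$. The symmetric clone then contains all $f_\sigma$, whose components are the $\varphi_{\sigma^\ast b}$, possibly reoriented.

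The strategy is to show that if $f$ is not neutral, then the clone generated by $\{f_\sigma\}$ is so rich that every $D$ it preserves is trivial. The first step is to use compositions of the $f_\sigma$ and projections, varying the inputs only on a chosen pair, to produce local rules whose components on distinct pairs are independent. The second step is to show that these rules realise arbitrary combinations, exploiting the $2$-transitivity of the action on $[A]^2$-orientations and using $|A|\ge5$ to find enough disjoint or disjointly overlapping pairs. The hope is to obtain, for some pair, a component that is non-self-dual, for instance a constant-like behaviour such as $\wedge$ or $\vee$ on one orientation, or a mixture of distinct components on different pairs. Combined with symmetry, this should show that the clone contains, for every $b$, rules acting as an arbitrary unanimity-respecting Boolean function on $b$ while acting as a projection elsewhere. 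A set preserved by all such rules is a product set determined coordinatewise, so it equals $\{c: c\restriction_X=d\restriction_X\}$ and is trivial.

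This last generation argument is where I expect the real work. It requires a case analysis via Post's lattice on which Boolean clones the non-neutral behaviour can generate, and I would first try to extract a binary $\wedge$-type component on a single pair using a non-neutral discrepancy between two pairs.
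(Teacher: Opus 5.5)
Your arguments for $1\Rightarrow 2$, $3\Rightarrow 4$ and $4\Rightarrow 1$ are fine and essentially coincide with the paper's. You read decisive coalitions as self-dual idempotent Boolean functions, use the four clones $O_1$, $D_2$, $L_4$, $D_1$ of such functions from Post's lattice, and apply Proposition~\ref{PropNeutrality} to the symmetric class of two-element rational sets. The genuine gap is $2\Rightarrow 3$, which is the substance of the theorem and which you leave as a hope. Two ingredients are missing.

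The first is a way to extract usable structure from non-neutrality. The paper passes to the symmetric conservative $2$-clone $G=\widehat{\mathcal G}$ generated by all $\widehat{f_\sigma}$. Suppose every binary member of $G$ is a global projection. Then for any $g\in G$ and $\bm x\in A^n_2$, identifying variables gives a binary $g'\in G$ and indices $i,j$ with $g(\sigma(x_1),\ldots,\sigma(x_n))=g'(\sigma(x_i),\sigma(x_j))=\sigma(g(\bm x))$. So every member of $G$ is self-dual and $f$ is neutral. Hence a non-neutral $f$ yields a binary $2$-function in $G$ that is not a projection.

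The second ingredient, where $|A|\ge 5$ is really used, is Shelah's separation lemma (Fact~\ref{fact1}). Such a $G$ contains, for any two distinct pairs $\{a_1,a_2\}\neq\{a_3,a_4\}$ and any $x\in\{a_1,a_2\}$, $y\in\{a_3,a_4\}$, a binary $g$ with $g(a_1,a_2)=x$ and $g(a_3,a_4)=y$. An induction on the number of prescribed arguments then shows that $G$ is the free extension $H^\uparrow(A)$. In particular, let $b_1,\dots,b_m$ enumerate the pairs on which $D$ is not constant. Then $G$ contains an $m$-ary rule acting as the $i$-th projection on $b_i$, which forces $D=\{c: c\restriction_Z=d\restriction_Z\}$.

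Your sketch gives no argument for either step. ``$2$-transitivity'' and ``enough disjoint or overlapping pairs'' do not say how two given pairs are separated. The paper's examples for $|A|=3,4$ show the implication actually fails there, so this step cannot follow from transitivity of $S_A$ alone.

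Your intended target is also misstated. You aim for rules that act as an arbitrary unanimity-respecting Boolean function on a pair $b$, and you propose to start by extracting a non-self-dual component such as $\wedge$. This is impossible in general. Take the binary rule that is the first projection on a nonempty proper subset $X\subset[A]^2$ and the second projection elsewhere. It is local and non-neutral, yet every rule in the symmetric clone it generates acts as a projection on each pair, so $G_b\cong O_1$. Likewise, a rule that is majority on some pairs and $x\oplus y\oplus z$ on others has only self-dual components, so $G_b\cong D_1$ and no $\wedge$ ever appears.

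What non-neutrality detects is variation of the components across pairs, not the Boolean clone $G_b$. What the final step needs is only $G\supseteq O_1^\uparrow(A)$, i.e.\ projections chosen independently pair by pair, and your own closing argument already works with just these. A Post's-lattice case analysis of the components is therefore not the right tool for $2\Rightarrow 3$. The real work is the dichotomy on binary members of $G$ plus the separation lemma.
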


\begin{rems}~
	\begin{enumerate}
		\item Theorem \ref{Th1} can be obviously extended to the case of aggregation of individual preferences, which in general are represented by connected asymmetric relations. To do this, it is necessary to replace choice functions $c_P\in C$ with relations $P$ and adjust the definitions accordingly.
		\item Theorem \ref{Th1} lies between the impossibility and possibility theorems. The existence of a nontrivial symmetric class of invariants imposes very strong restrictions on the aggregation rule, but still does not reduce the situation to a dictatorship. Furthermore, Item (\ref{Th1Item4}) allows us to perceive Theorem \ref{Th1} as a kind of \textit{reduction theorem}. Some problems of finding restricted domains for an arbitrary local rule can be reduced to similar problems for rules $\nu$, $\lambda$, and $\mu$, or even for rules $\lambda$ and $\mu$, since it can be shown that $\mathrm{Inv}\, \nu=\mathrm{Inv}\, \mu\cap \mathrm{Inv}\, \lambda$. For example, Arrow's theorem (for $|A|\geq 5$) can be derived from Theorem \ref{Th1}. To do this, it suffices to verify that the class of rational choice functions is not preserved by either rule $\lambda$ or $\mu$.
		\item The statement of Theorem \ref{Th1} is not true for infinite sets $A$ and in the case $|A|\in \{2, 3, 4\}$. In the case $|A|=2$, we have $|C|=2$, and there are obviously no non-trivial sets $D\subseteq C$, while there are infinitely many non-neutral aggregation rules that can be obtained using non-self-dual Boolean functions. In the case $A=\{a_1, a_2, a_3\}$, consider the function $g: A^2\to A$ with the Cayley table:
						\begin{table}[h!]\centering
			\begin{tabular}{l|l|l|l}
				$g$ &  $a_1$ & $a_2$ & $a_3$\\
				\hline
				$a_1$ &  $a_1$ & $a_2$ & $a_1$\\
				\hline
				$a_2$ &  $a_2$ & $a_2$ & $a_3$\\
				\hline
				$a_3$ &  $a_1$ & $a_3$ & $a_3$\\
			\end{tabular}
		\end{table}	
	and define the aggregation rule $f: C^2\to C$ by letting
		$$
		f(c_1, c_2)(b)=g(c_1(b), c_2(b))
		$$
		for all $c_1, c_2\in C$ and $b\in [A]^2$. It is easy to see that the rule $f$ is well-defined, local and not neutral. Consider a set $D\subseteq C$ consisting of two choice functions $c_1, c_2$ of the ``rock-paper-scissors'' type:
		\begin{table}[h!]\centering
			\begin{tabular}{l|l|l}
				&  $c_1$ & $c_2$ \\
				\hline
				$\{a_1, a_2\}$ &  $a_1$ & $a_2$ \\
				\hline
				$\{a_2, a_3\}$ &  $a_2$ & $a_3$ \\
				\hline
				$\{a_1, a_3\}$ &  $a_3$ & $a_1$ \\
			\end{tabular}
		\end{table}
		It is easy to see that the set $D=\{c_1, c_2\}$ is symmetric and non-trivial. Moreover, 
		$$
		f(c_1, c_2)=f(c_2, c_1)=c_2,
		$$	
		so, $D$ is invariant for $f$. However, it can be noted that for the case $n=3$, Theorem \ref{Th1} remains true if we exclude Item (\ref{Th1Item2}).
		\par For $A=\{a_1, a_2, a_3, a_4\}$, consider the function $h: A^2\to A$ with the Cayley table:
		\begin{table}[h!]\centering
			\begin{tabular}{l|l|l|l|l}
				$h$ &  $a_1$ & $a_2$ & $a_3$ & $a_4$\\ 
				\hline
				$a_1$ &  $a_1$ & $a_2$ & $a_1$& $a_4$\\
				\hline
				$a_2$ &  $a_1$ & $a_2$ & $a_3$& $a_2$\\
				\hline
				$a_3$ &  $a_3$ & $a_2$ & $a_3$& $a_4$\\
				\hline
				$a_4$ &  $a_1$ & $a_4$ & $a_3$& $a_4$\\ 
			\end{tabular}
		\end{table}
		\\and define the aggregation rule $f: C^2\to C$ by letting
		$$
		f(c_1, c_2)(b)=h(c_1(b), c_2(b))
		$$
		for all $c_1, c_2\in C$ and $b\in [A]^2$. Again it is easy to see that the rule $f$ is well-defined, local and not neutral. Consider the class $\mathscr D\subseteq \mathscr P(C)$ of all two-element sets $\{c_{\prec_1}, c_{\prec_2}\}$ of choice functions $c_{\prec_1}$ and $c_{\prec_2}$ constructed according to linear orders 
		\begin{align*}
			& {\prec_1}: x_1\prec_1 x_2\prec_1 x_3 \prec_1 x_4,\\
			& {\prec_2}: x_2\prec_2 x_1\prec_2 x_4 \prec_2 x_3
		\end{align*}
		where $\{x_1, x_2, x_3, x_4\}=A$. Obviously, $\mathscr D$ is a symmetric non-trivial rational class. Moreover, the aggregation rule $f$ preserves each set $D\in \mathscr D$. To prove this, it suffices to note that $h$ acts as the first projection on pairs $(a_1, a_3)$, $(a_3, a_1)$, $(a_2, a_4)$ and $(a_4, a_2)$, and as the second projection on all other pairs. Therefore, for any pair of linear orders ${\prec_1}: x_1\prec_1 x_2\prec_1 x_3 \prec_1 x_4,$ and ${\prec_2}: x_2\prec_2 x_1\prec_2 x_4 \prec_2 x_3$ we have:
		$$
		f(c_{\prec_1}, c_{\prec_2})= c_{\prec_1}\text{ and } f(c_{\prec_2}, c_{\prec_1})= c_{\prec_2}
		$$
		if $\{x_1, x_2\}$ is either $\{a_1, a_3\}$ or $\{a_2, a_4\}$, and 
		$$
		f(c_{\prec_1}, c_{\prec_2})= c_{\prec_2}\text{ and } f(c_{\prec_2}, c_{\prec_1})= c_{\prec_1},
		$$
		otherwise.
		\par In general, the combinatorics of aggregation on a four-element set of alternatives turns out to be the most complex. It can be shown that no non-dictatorial local aggregation rule has a symmetric invariant $D\subseteq C$. Moreover, the model of individual choice from three-element sets of alternatives already provides such an example, see \cite{Polyakov1}. The equivalence of Items (\ref{Th1Item3}) and (\ref{Th1Item4}) of Theorem \ref{Th1} remains true in the case $|A|=4$.
		\par For the case of an infinite number of alternatives, see \cite{Fish}.
		\item It can be shown that rules $\delta$, $\lambda$, $\mu$ and $\nu$ are not pairwise invariantly equivalent, i.e., Item \ref{Th1Item4} cannot be improved.
		\item How natural are the aggregation rules that satisfy the conditions of Theorem \ref{Th1}? Local aggregation rules that are invariantly equivalent to the majority rule $\mu$ do not raise serious doubts. It is easy to see that these are precisely local and neutral rules with monotone decisive coalitions. The rule~$\nu$ (and some rules that are invariantly equivalent to it) can also be justified under certain circumstances. Suppose that agents $2$ and $3$ give a correct (or truthful) answer with probability $\alpha$, $1/2<\alpha<1$, and agent $1$ with probability $1-\alpha$. It can be shown that in this case $\nu$ maximizes the probability of a correct social decision (in the class of local and neutral aggregation rules). \par In contrast, the rule $\lambda$ appears quite paradoxical in the list of logical rules. It can be shown that all rules that are invariantly equivalent to it are structured as follows. The set of agents is divided into dummies and non-dummies, and the social group chooses an alternative $a$ if an odd number of non-dummy agents voted for it. Thus, the aggregation procedure resembles a children's \textit{counting-out game}. Despite this, our study provides no arguments against these rules. Moreover, the combinatorial properties of these rules are, in some ways, better than those of $\mu$ and $\nu$. For example, unlike them, rule $\lambda$ has non-trivial symmetric invariants $D\subseteq C$ (for $|A|\neq 2\pmod{4}$), see~\cite{Polyakov1}.
	\end{enumerate}
\end{rems}

\section{Universal algebraic approach and the proof of Theorem \ref{Th1}}
The following proof of Theorem \ref{Th1} uses universal algebraic methods, in particular, the theory of closed classes of discrete functions. Detailed information can be found in the monographs \cite{March, Lau, Poshel}. The key point of the proof is an application of Post's classification theorem, which can be found in the original paper \cite{Post} or (in a more modern presentation) in \cite{MarchBool, Zuk}.

\par The set of all local aggregation rules for $\mathfrak C_2(A)$ is denoted by $\mathrm{LOC}_2(A)$.  Local aggregation rules $f: C^n\to C$ admit representations by partial functions $\widehat f:A^n\to A$. For any set $A$ and natural number $n$ we denote
$$
A^n_2\leftrightharpoons \{(x_1, x_2, \ldots, x_n)\in A^n:|\{x_1, x_2, \ldots, x_n\}|\leq 2\}.
$$

\begin{definition} A function $g:A_2^n\to A$ is called an ($n$-ary) \textit{$2$-function} on $A$. \par An $n$-ary $2$-function $g$ on $A$ is called \textit{conservative} if  
	$$
	\bigvee\limits_{1\leq i\leq n} (g(x_1, x_2, \ldots, x_n)=x_i)
	$$
	for all $(x_1, x_2, \ldots, x_n)\in A_2^n$.
	\par For any permutation $\sigma\in S_A$ and $2$-function $g: A^n_2\to A$ the $2$-function $g_\sigma: A^n_2\to A$ is defined by letting $$
	g_\sigma(x_1, x_2, \ldots, x_n)=\sigma^{-1}(g(\sigma(x_1), \sigma(x_2), \ldots, \sigma(x_n)))
	$$
	for all $(x_1, x_2, \ldots, x_n)\in A^n_2$. \par A $2$-function $g: A^n_2\to A$ is \textit{self-dual} if $g_\sigma=g$ for every permutation $\sigma\in S_A$.
	\par Any set \mbox{$G$} of $2$-functions on $A$ (of all arities $n$, $1\leq n<\omega$), that is closed with respect to superposition and contains all projections, is called a \textit{$2$-clone} on $A$ (the concepts of superposition and projection for $2$-functions are understood in the natural sense). \par A $2$-clone $G$ on $A$ is 
	\begin{enumerate}
		\item \textit{conservative} if it contains only conservative $2$-functions,
		\item \textit{self-dual} if it contains only self-dual $2$-functions,
		\item \textit{symmetric} if for every function $g\in G$ and permutation $\sigma\in S_A$ it contains the function $g_\sigma$.
	\end{enumerate}
	The set (in fact, the symmetric $2$-clone) of all conservative $2$-functions on $A$ is denoted by $\mathrm{CON}_2(A)$. 
\end{definition}
We say that an ($n$-ary) 2-function $g:A^n_2\to A$ \textit{represents} an ($n$-ary) aggregation rule $f\in \mathrm{LOC}_2(A)$ if
$$
f(c_1, c_2, \ldots, c_n)(b)=g(c_1(b), c_2(b), \ldots, c_n(b)) 
$$
for all $c_1, c_2, \ldots, c_n\in C$ and $b\in B$. 
\begin{prop}\label{Prop3} Let $|A|\geq 2$. Then 
	\begin{enumerate}\label{Prop 2}
		\item for any aggregation rule $f\in\mathrm{LOC}_2(A)$ there is a unique $2$-function $\widehat f \in \mathrm{CON}_2(A)$ that represents $f$; the mapping $f\mapsto \widehat f$ is a bijection from $\mathrm{LOC}_2(A)$ onto $\mathrm{CON}_2(A)$;
		\item for any clone $\mathcal G\subseteq \mathrm{LOC}_2(A)$ the set $\widehat {\mathcal G}\leftrightharpoons\{\widehat f: f\in \mathcal G\}$ is a $2$-clone;
		\item if a clone $\mathcal G\subseteq \mathrm{LOC}_2(A)$ is symmetric, then the $2$-clone $\widehat {\mathcal G}$ is symmetric.
	\end{enumerate}
\end{prop}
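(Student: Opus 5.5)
The plan is to construct $\widehat f$ directly from $f$ by evaluating on suitable profiles, and then to check the three items in order.

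For item 1, fix $n$ and $f\in\mathrm{LOC}_2(A)$, and take $(x_1,\ldots,x_n)\in A^n_2$.
\begin{itemize}
\item If all $x_i$ equal a single $a$, choose any $b\in B$ containing $a$ (possible since $|A|\geq 2$).
\item If $\{x_1,\ldots,x_n\}=\{a,a'\}$ with $a\neq a'$, set $b=\{a,a'\}$.
\end{itemize}
In both cases choose $c_1,\ldots,c_n\in C$ with $c_i(b)=x_i$; this is possible because $x_i\in b$ and $C$ contains all choice functions. Then put $\widehat f(x_1,\ldots,x_n)=f(c_1,\ldots,c_n)(b)$.

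The only genuinely non-routine point is well-definedness in the one-valued case, where $b$ is not determined by the tuple. Locality makes the value independent of the choice of the $c_i$ for a fixed $b$. To compare two different $b\ni a$, use unanimity: take $c_1=\cdots=c_n=c$ with $c(b)=a$. Then $f(c,\ldots,c)(b)=c(b)=a$, so the value is $a$ regardless of $b$. This shows $\widehat f(a,\ldots,a)=a$ and settles well-definedness. In the two-valued case $b$ is forced, and locality alone suffices.

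Next, $\widehat f$ represents $f$ by locality: for each $b$, the tuple $(c_1(b),\ldots,c_n(b))$ lies in $A^n_2$ and determines $b$, except in the constant case, which unanimity has handled. Conservativity holds because $f(\ldots)(b)\in b=\{x_1,\ldots,x_n\}$ in the two-valued case, and because of idempotency in the constant case.

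Uniqueness: any representing 2-function must agree with $\widehat f$ on every tuple, since every tuple in $A^n_2$ arises as $(c_1(b),\ldots,c_n(b))$ for some choice functions and some $b$.

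The map $f\mapsto\widehat f$ is injective, because $f$ is recovered from $\widehat f$ by the representing identity. It is surjective: given a conservative $g$, define $f(c_1,\ldots,c_n)(b)=g(c_1(b),\ldots,c_n(b))$. Conservativity makes this a value in $b$, so the result is a choice function, and the rule is evidently local. Unanimity holds because a conservative $g$ satisfies $g(a,\ldots,a)=a$.

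For item 2, argue that the map respects projections and superposition.
\begin{itemize}
\item The projection rule $f(c_1,\ldots,c_n)=c_i$ is represented by the $i$-th projection.
\item Take $f,f_1,\ldots,f_n\in\mathcal G$ with $f_j$ being $m$-ary. At a fixed $b$,
\[
f(f_1,\ldots,f_n)(\bm c)(b)=\widehat f\bigl(\widehat f_1(\bm c(b)),\ldots,\widehat f_n(\bm c(b))\bigr),
\]
where $\bm c(b)=(c_1(b),\ldots,c_m(b))$. The inner values all lie in $b$, so the outer argument tuple lies in $A^n_2$.
\end{itemize}
By the uniqueness in item 1, this gives $\widehat{f(f_1,\ldots,f_n)}=\widehat f(\widehat f_1,\ldots,\widehat f_n)$. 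Hence $\widehat{\mathcal G}$ contains the projections and is closed under superposition (with superposition of 2-functions understood on $A^m_2$, as intended). So $\widehat{\mathcal G}$ is a 2-clone.

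For item 3, the key is the identity $\widehat{f_\sigma}=(\widehat f)_{\sigma'}$ for a suitable $\sigma'\in\{\sigma,\sigma^{-1}\}$. To derive it, unwind the definitions:
\[
c_\sigma(b)=\sigma^{-1}\bigl(c(\sigma^\ast b)\bigr),
\]
so $f_\sigma(\bm c)(b)=\sigma\bigl(f(\bm c_\sigma)(\sigma^{-1\ast}b)\bigr)$, with the inverse on the outer application. Substituting the representation of $f$ produces a conjugate of $\widehat f$ by $\sigma$ or $\sigma^{-1}$, depending on conventions.

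Since $\sigma$ ranges over the whole group $S_A$, the conjugate is $(\widehat f)_\tau$ for some $\tau\in S_A$ in either case. Moreover, as $\sigma$ ranges over $S_A$, all $(\widehat f)_\tau$ are obtained. Symmetry of $\mathcal G$ then gives symmetry of $\widehat{\mathcal G}$.

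I expect the main obstacle to be only this bookkeeping of $\sigma$ versus $\sigma^{-1}$. The conclusion does not depend on which one appears, because $S_A$ is closed under inverses.
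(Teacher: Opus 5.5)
Your proposal is correct and follows the paper's own route: define $\widehat f$ by choosing a pair $b\supseteq\{x_1,\ldots,x_n\}$ and choice functions with $c_i(b)=x_i$, then check items 1--3 by routine verification (in item 3 the computation gives exactly $\widehat{f_\sigma}=(\widehat f)_{\sigma^{-1}}$). You are also right that for constant tuples, where $b$ is not determined, well-definedness needs unanimity and not just locality; the paper's one-line proof cites only locality and passes over this point.
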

\begin{proof}
	Let $f$ be an $n$-ary aggregation rule in $\mathrm{LOC}_2(A)$. To construct a $2$-function $\widehat f$, for each sequence $(x_1, x_2, \ldots, x_n)\in A^n_2$, choose a set $b\in [A]^2$ for which $\{x_1, x_2, \ldots, x_n\}\subseteq b$ (if among $x_1, x_2, \ldots, x_n$ there are distinct elements, $b$ is uniquely determined). Next, we choose functions $c_1, c_2, \ldots, c_n\in C$ for which $c_1(b)=x_1$, $c_2(b)=x_2$, $\ldots$, $c_n(b)=x_n$ and set $\widehat f(x_1, x_2, \ldots, x_n)= f(c_1, c_2, \ldots, c_n)(b)$. From the locality condition it follows that the function $\widehat f$ is well defined. The rest of the proposition is proved by routine verification.
\end{proof}
\par Proposition \ref{Prop 2} reduces the classification of local logical aggregation rules for $\mathfrak C_2(A)$ to the classification of symmetric conservative clones on $A$. A description of all such clones can be found in Post's classification theorem.
\par First, suppose $|A|=2$. Without loss of generality, $A=\{0,1\}$. Then, any ($2$-)clone on $A$ is a subclone of Post's class of all Boolean functions that preserve $\mathbf 0$ and $\mathbf 1$. In addition, a clone $H$ of Boolean functions is symmetric if and only if it is closed with respect to duality, i.e.
$$
h(x_1, x_2, \ldots, x_n)\in H\Rightarrow \overline{h(\overline x_1, \overline x_2, \ldots, \overline x_n)}\in H. 
$$
\par There are only six of these clones: $O_1$, $D_1$, $D_2$, $L_4$, $A_4$, $C_4$ (in Post's notation, see \cite{Post}). Generating functions for them are given in Table \ref{tab11}. Note that the self-dual clones from this list are exactly clones $O_1$, $D_1$, $D_2$, $L_4$.

\renewcommand{\arraystretch}{1.5}
\begin{table}[h!]\centering
	\caption{Clones of Boolean functions preserving $\textbf 0$ and $\textbf 1$ that are closed with respect to duality.}\label{tab11}
	\begin{tabular}{|l|l|l|l|l|l|}
		\hline
		Clone &  Generating functions & Clone & Generating functions& Clone & Generating functions\\
		\hline
		$O_1$ &  $x$ & $D_1$ & \begin{minipage}[b]{0.27\textwidth} \vspace{0.1cm}$\overline xy\vee \overline yz\vee yz$ or\\ $xy\vee yz\vee xz$, $x\oplus y\oplus z$\end{minipage}  & $D_2$ & $xy\vee yz\vee xz$\\
		\hline
		$L_4$ &  $x\oplus y\oplus z$ & $A_4$ & $xy, x\vee y$ & $C_4$ & $x\vee  y\overline z$\\
		\hline
	\end{tabular}
\end{table}
\begin{definition}
	Let $A$ be a set, $|A|\geq 2$, $n$ a natural number, and $H$ a set of Boolean functions. An $n$-ary $2$-function $g:A^n_2\to A$ is 
	\begin{enumerate}
		\item\textit{freely associated} with $H$ if for any set $b\in [A]^2$ there exist an $n$-ary function $h\in H$ and a bijection $\sigma: b\to \{0,1\}$ such that
		$$
		g(x_1, x_2, \ldots, x_n)=\sigma^{-1}h(\sigma(x_1), \sigma(x_2), \ldots, \sigma(x_n))
		$$
		for all elements $x_1, x_2, \ldots, x_n\in b$;
		\item\textit{dependently associated} with $H$ if there exists an $n$-ary function $h\in H$ such that
		$$
		g(x_1, x_2, \ldots, x_n)=\sigma^{-1}h(\sigma(x_1), \sigma(x_2), \ldots, \sigma(x_n))
		$$
		for any set $b\in [A]^2$, bijection $\sigma: b\to \{0,1\}$ and elements $x_1, x_2, \ldots, x_n\in b$.
	\end{enumerate}
	The set of all $2$-functions $g:A^n_2\to A$, $1\leq n<\omega$, that are freely (dependently) associated with $H$ is called a \textit{free} (respectively, \textit{dependent}) extension of $H$ to $A$ and is denoted by $H^\uparrow(A)$ (respectively, $H^\Uparrow(A)$). 
\end{definition}
For any conservative $2$-clone $G$ on $A$ and set $b\in [A]^2$ denote $G_b\leftrightharpoons \bigcup_{1\leq n<\omega}\{g\restriction_{b^n}: g\in G\cap A^{A^n}\}$. Obviously, $G_b$ is a clone on $b$.
\begin{prop}
	Let $A$ be a set of cardinality $|A|\geq 2$. Then
	\begin{enumerate}
		\item for any symmetric conservative clone $H$ of Boolean functions the set $H^\uparrow(A)$ is a symmetric conservative $2$-clone, and for any $b\in [A]^2$ the clone $G_b$ is naturally isomorphic to $H$, 
		\item for any self-dual conservative clone $H$ of Boolean functions, i.e., a clone $H\in \{O_1, D_1, D_2, L_4\}$, the set $H^\Uparrow(A)$ is a self-dual conservative $2$-clone, and for any $b\in [A]^2$ the clone $G_b$ is naturally isomorphic to $H$.
	\end{enumerate}
\end{prop}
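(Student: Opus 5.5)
The plan is to verify the defining properties of a $2$-clone directly, working one two-element set $b\in[A]^2$ at a time. The key observation is that a $2$-function $g:A^n_2\to A$ is completely determined by its restrictions $g\restriction_{b^n}$, $b\in[A]^2$. These restrictions may be chosen independently, since distinct $b$'s overlap only on constant tuples, and on those every function involved is idempotent: all Boolean functions in the listed clones preserve $\mathbf 0$ and $\mathbf 1$. For each $b$ I transport along a bijection $\sigma:b\to\{0,1\}$ and write $h^\sigma(\bm x)=\sigma^{-1}h(\sigma(x_1),\ldots,\sigma(x_n))$ for $\bm x\in b^n$. Conservativity is then immediate in both items. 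On a tuple in $b^n$ the value lies in $b$, and it equals some $x_i$: if the tuple is constant, idempotence applies, and if it is nonconstant, every element of $b$ occurs among the $x_i$.

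For item 1, I first check that projections belong to $H^\uparrow(A)$, using the projections of $H$ with any $\sigma$. Next comes closure under superposition. Let $g\in H^\uparrow(A)$ be $n$-ary and $g_1,\ldots,g_n\in H^\uparrow(A)$ be $m$-ary, and fix $b$. Write $g\restriction_{b^n}=h^{\sigma}$ and $g_j\restriction_{b^m}=h_j^{\sigma_j}$. The bijections $\sigma_j$ may differ from $\sigma$, but any two bijections $b\to\{0,1\}$ differ by the identity or by negation. Hence $h_j^{\sigma_j}=(h_j')^{\sigma}$, where $h_j'$ is either $h_j$ or its dual $\overline{h_j(\overline x_1,\ldots,\overline x_m)}$. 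Since $H$ is closed under duality (this is what symmetry of a Boolean clone means, as noted before Table~\ref{tab11}), $h_j'\in H$. The composition restricted to $b^m$ is then $(h(h_1',\ldots,h_n'))^\sigma$, and $h(h_1',\ldots,h_n')\in H$. Symmetry of $H^\uparrow(A)$ follows because $g_\tau\restriction_{b^n}$ is obtained from $g\restriction_{\tau(b)^n}$ by conjugating with $\tau$, so it equals $h^{\sigma\circ\tau}$ on $b$. Finally, $G_b\cong H$ via $\sigma$. Every restriction lies in $H^\sigma$, and conversely each $h\in H$ occurs as $g\restriction_{b^n}$ for some $g$: define $g$ to be $h^\sigma$ on $b$ and, say, $h^{\sigma'}$ on every other $b'$, which is consistent by the independence observation.

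For item 2, the same scheme applies with one fixed $h$ for all $b$ and all $\sigma$. The point to verify first is that such a $g$ is well defined. On a given $b$ the two bijections $\sigma$ and $\overline{\sigma}$ must give the same function, that is, $h^\sigma=h^{\overline\sigma}$. This holds precisely because $h$ is self-dual, which is true for every function in $O_1,D_1,D_2,L_4$. Consistency across different $b$ again reduces to constant tuples. Closure under superposition is then direct, since all the functions are transported by the same $\sigma$. Self-duality of $g$, i.e.\ $g_\tau=g$, holds because $\sigma\circ\tau$ is again a bijection $\tau^{-1}(b)\to\{0,1\}$ and $h$ is the same everywhere. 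The isomorphism $G_b\cong H$ is as before.

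The main obstacle is the bookkeeping in item 1. One must reconcile the differing bijections $\sigma,\sigma_j$ across a composition, and this is exactly where duality-closure of $H$ is used. One must also check well-definedness on constant tuples, which is where the preservation of $\mathbf 0$ and $\mathbf 1$ enters. Everything else is routine.
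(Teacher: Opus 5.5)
Your proposal is correct and takes the paper's approach: the paper's proof is just ``by a routine checking,'' and you carry out that verification one $b\in[A]^2$ at a time. You also correctly identify where each hypothesis is used: idempotency of $H$ lets the restrictions glue on constant tuples, duality-closure of $H$ reconciles the bijections $\sigma_j$ in a superposition for $H^\uparrow(A)$, and self-duality of each $h$ gives $h^\sigma=h^{\overline\sigma}$, which is what makes $H^\Uparrow(A)$ well defined.
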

\begin{proof}
	By a routine checking.
\end{proof}

\begin{theorem}\label{Th2_clone}
	Let $5\leq |A|<\omega$, and $G$ be a symmetric conservative $2$-clone on $A$. Then one of the following two conditions holds:
	\begin{enumerate}
		\item\label{cond1} $G$ is the free extension of one of the clones $O_1$, $D_1$, $D_2$, $L_4$, $A_4$, $C_4$ to $A$,
		\item\label{cond2} $G$ is the dependent extension of one of the clones $O_1$, $D_1$, $D_2$, $L_4$ to $A$.
	\end{enumerate}
\end{theorem}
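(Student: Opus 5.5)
Let $G$ be a symmetric conservative $2$-clone on $A$, $5\leq |A|<\omega$. The plan has three stages. First, identify the two-element restrictions $G_b$. Next, describe how an $n$-ary $g\in G$ can vary with $b$. Finally, show that this variation is either unrestricted (free extension) or absent (dependent extension).

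\textbf{Step 1 (local type).} For $b\in[A]^2$ the set $G_b$ is a clone on $b$. Conservativity of $G$ makes every function in $G_b$ idempotent, i.e. it preserves both constants once $b$ is identified with $\{0,1\}$. Symmetry of $G$ under the transposition swapping the two elements of $b$ shows that $G_b$ is closed under duality. By Post's theorem (Table~\ref{tab11}), $G_b$ is therefore one of $O_1,D_1,D_2,L_4,A_4,C_4$. Because $S_A$ acts transitively on $[A]^2$, this clone $H$ is the same for every $b$. Consequently $G\subseteq H^\uparrow(A)$.

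\textbf{Step 2 (the non-self-dual cases).} Suppose $H\in\{A_4,C_4\}$. I want $G=H^\uparrow(A)$. Take a generator $h$ of $H$ (for instance $xy$, or $x\vee y\overline z$) and orientations $\tau_b$ chosen independently on each $b$. I must produce $g\in G$ with $g\restriction_{b^n}=h$ under $\tau_b$ for all $b$ at once. This is a patching argument. Because $H$ is not self-dual, some $g_0\in G$ realizes $h$ on a given $b$ with a fixed orientation. Applying a permutation fixing $b$ setwise gives copies of $g_0$ whose orientations on the other pairs can be changed. Composing such copies with a function that acts as a "selector" separating two pairs (for $A_4$, take $xy$ on one orientation and $x\vee y$ on the other), I would show that for any two pairs $b\neq b'$ the function $h$ can be realized with independent orientations on $b$ and $b'$. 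An induction over the finitely many pairs then yields arbitrary orientation patterns. Since $H$ is generated by $h$, closure under superposition gives all of $H^\uparrow(A)$. The hypothesis $|A|\geq 5$ is needed here: for any two pairs (disjoint or sharing a point) it supplies enough free points that permutations can fix one pair and flip another while controlling the rest. This fails for $|A|\le 4$, matching the counterexamples in the Remarks.

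\textbf{Step 3 (self-dual $H$), which I expect to be the main obstacle.} Now $H\in\{O_1,D_1,D_2,L_4\}$, and I want to show that $G$ is either $H^\Uparrow(A)$ or $H^\uparrow(A)$. For $H=O_1$ both extensions reduce to projections, so there is nothing to prove. In the remaining cases, suppose $G\not\subseteq H^\Uparrow(A)$. Then some $g\in G$ uses different functions of $H$ on two pairs, in the sense that it is not given by one function $h$ uniformly under all bijections. Since $h$ is self-dual, the orientation of $b$ does not matter, so the difference genuinely lies in the assignment $b\mapsto h_b$. My first attempt would be to use symmetry and superposition with a majority or minority function to isolate a binary-like discrepancy. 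That is, I would derive a $2$-function in $G$ that acts as $\mu$ on some pairs and as a projection (or as $x\oplus y\oplus z$) on others. This is easiest with ternary functions, using the explicit finite lists of ternary functions in $D_1$, $D_2$ and $L_4$. Transitivity of $S_A$ on pairs, together with $|A|\ge5$, lets me move the discrepancy so that it separates any single chosen pair from all the others. From such "pair-indicator" functions one can then assemble, by superposition, every freely associated function, exactly as in Step 2, which gives $G=H^\uparrow(A)$. The hard part is the combinatorial case analysis showing that any non-uniform $g$ yields such a separating function, particularly for $D_2$, where the functions are monotone and a discrepancy can only be between two self-dual monotone functions.

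If that case analysis turns out to be too unwieldy, the fallback is to analyse the orbits of $S_A$ on pairs of pairs, namely disjoint pairs versus pairs sharing a point. Any non-uniformity must then distinguish these two orbits; it must be shown to be compatible with closure only when every pattern is present, and excluding the intermediate "orbit-dependent" clones is precisely where $|A|\geq 5$ should be used.
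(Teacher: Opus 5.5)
\textbf{There is a genuine gap, and it sits in Step 3.} That step contains a false claim and leaves the decisive argument open.

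For $H=O_1$ the two extensions do not coincide. In $O_1^\uparrow(A)$ the projection index may depend on the pair, so it contains binary $2$-functions that are the first projection on some $b\in[A]^2$ and the second projection on others. The $|A|=4$ counterexample in the Remarks is exactly such a pair-dependent projection. So this is the case where $|A|\ge 5$ is indispensable, not a vacuous one.

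For $D_1,D_2,L_4$ you defer everything to a ternary case analysis. The missing idea is simply identification of variables. Suppose $\bm x\in A^n_2$ takes the values $u,v$. Merge all variables with $x_i=u$ into one variable and all with $x_i=v$ into another. This gives a binary $g'\in G$ with $g(\bm x)=g'(u,v)$ and $g(\sigma(x_1),\ldots,\sigma(x_n))=g'(\sigma(u),\sigma(v))$ for every $\sigma\in S_A$.
\begin{itemize}
\item If every binary member of $G$ is a genuine projection, then every $g\in G$ commutes with all permutations of $A$. That is precisely $G\subseteq H^\Uparrow(A)$, with $H=G_{\{0,1\}}$ self-dual.
\item Contrapositively, $G\not\subseteq H^\Uparrow(A)$ immediately yields a binary non-projection. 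Binary members of a self-dual $H$ are projections, so this is a pair-dependent projection.
\item For $A_4$ and $C_4$ such a function already lies in $G_b$, namely $xy$.
\end{itemize}
The right dichotomy is therefore whether $G$ contains a binary non-projection. It treats all six clones uniformly and makes both your split by self-duality and the $D_2$ analysis unnecessary.

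Once a binary non-projection exists, two things remain, and your Step 2 only gestures at both.

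First, you need Shelah's lemma (Fact 1 in the proof). For $\{a_1,a_2\}\neq\{a_3,a_4\}$, $x\in\{a_1,a_2\}$ and $y\in\{a_3,a_4\}$, there is a binary $g\in G$ with $g(a_1,a_2)=x$ and $g(a_3,a_4)=y$. This is where $|A|\ge 5$ is used. It does not follow from conjugating by permutations that fix $b$ setwise, because those move the other pairs around rather than flipping them independently.

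Second, you need an induction on finite $Z\subseteq A^n_2$: every $n$-ary $h\in H^\uparrow(A)$ agrees on $Z$ with some $g\in G$. Each step repairs two wrong values at a time by composing with such binary selectors, e.g.\ $h_3(h_1(g'_{\bm x},g_{\bm y}),h_2(g_{\bm x},g'_{\bm y}))$. Running this induction directly on arbitrary $n$-ary targets also replaces your step ``$H$ is generated by $h$, so superposition gives all of $H^\uparrow(A)$.'' That step needs justification, because you must realize independent choices of the function $h_b\in H$ on different pairs, not just independent orientations of one generator.
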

\begin{proof} Since $G$ is symmetric, for any sets $b_1, b_2\in [A]^2$ the clones $G_{b_1}$ and $G_{b_2}$ are naturally isomorphic. In addition, each clone $G_b$, $b\in [A]^2$, is symmetric. Therefore, it is naturally isomorphic to some Post's class $H\in\{O_1, D_1, D_2, L_4, A_4, C_4\}$. 
	\par Case 1: there is a binary $2$-function $g\in G$ that is not a projection. 
	\par In this case the condition $|A|\geq 5$ implies the following fact, see \cite{Shelah}.
	\begin{fact}\label{fact1}
		For all $a_1, a_2, a_3, a_4\in A$ such that $\{a_1, a_2\}\neq \{a_3, a_4\}$, $x\in \{a_1, a_2\}$, and $y\in \{a_3, a_4\}$ there exists a binary function $g\in G$ for which $g(a_1, a_2)=x$ and $g(a_3, a_4)=y$. 
	\end{fact}
	Let us show that this implies $G=H^\uparrow(A)$. It is enough to prove that for any natural number $n$, $n$-ary function $h\in H^\uparrow(A)$, and set $Z\subseteq A^n_2$ there is a function $g\in G$ such that $g\restriction_Z=h\restriction_Z$. By induction on $|Z|$. The induction base ($|Z|=0$) is obvious. In addition, the induction step is obvious if $Z\subseteq b^n$ for some $b\in [A]^2$. In the remaining case, choose two different sets $b_1, b_2\in [A]^2$ and two sequences $\bm x, \bm y\in Z$ such that $\bm x\in (b_1\setminus b_2)^n$ and $\bm y\in (b_2\setminus b_1)^n$. By the induction hypothesis, there are functions $g_{\bm x}, g_{\bm y}\in G$ such that 
	$$
	g_{\bm x}\restriction_{Z\setminus \{\bm y\}}=h\restriction_{Z\setminus \{\bm y\}}\text{ and }\,\,g_{\bm y}\restriction_{Z\setminus \{\bm x\}}=h\restriction_{Z\setminus \{\bm x\}}.
	$$
	If for any function $h'\in H^\uparrow(A)$ we have
	$$
	h'\restriction_{Z\setminus \{\bm x, \bm y\}}= h\restriction_{Z\setminus \{\bm x, \bm y\}}\Rightarrow (h'(\bm x)=g_{\bm x}(\bm x)\vee h'(\bm y)=g_{\bm y}(\bm y)),
	$$
	all is proved. Otherwise, by the induction hypothesis, there are functions $g'_{\bm x}, g'_{\bm y}\in G$ such that 
	$$
	g_{\bm x}'\restriction_{Z\setminus \{\bm x, \bm y\}}=g'_{\bm y}\restriction_{Z\setminus \{\bm x, \bm y\}}=h\restriction_{Z\setminus \{\bm x, \bm y\}}, g_{\bm x}'(\bm x)\in b_1\setminus b_2, \text{ and } g_{\bm y}'(\bm x)\in b_2\setminus b_1.
	$$
	Using Fact \ref{fact1}, choose binary functions $h_1, h_2, h_3\in G$ for which
	\begin{itemize}
		\item $h_1(g_{\bm x}'(\bm x), g_{\bm y}(\bm x))=g_{\bm x}'(\bm x)$, $h_1(g_{\bm x}'(\bm y), g_{\bm y}(\bm y))=g_{\bm y}(\bm y)$, 
		\item $h_2(g_{\bm x}(\bm x), g_{\bm y}'(\bm x))=g_{\bm x}(\bm x)$, $h_2(g_{\bm x}(\bm y), g_{\bm y}'(\bm y))=g_{\bm y}'(\bm y)$,
		\item $h_3(g_{\bm x}'(\bm x), g_{\bm x}(\bm x))=g_{\bm x}(\bm x)$, $h_3(g_{\bm y}(\bm y), g_{\bm y}'(\bm y))=g_{\bm y}(\bm y)$.
	\end{itemize}
	It is easy to verify that we can put $$g=h_3(h_1(g'_{\bm x}, g_{\bm y}), h_2(g_{\bm x}, g_{\bm y}'))$$
	(for the case $b_1\cap b_2=\emptyset$ the reasoning may be simpler).
	\par Case 2: case 1 does not hold. Then all binary functions from $G$ are projections. Let us show that this implies $G=H^\Uparrow(A)$. Without loss of generality, assume $\{0,1\}\subseteq A$. It is enough to show that $G_{\{0,1\}}\in \{O_1, D_1, D_2, L_4\}$, and for all natural numbers $n\geq 1$, $n$-ary $2$-functions $g\in G$, sets $b\in [A]^2$, sequences $(x_1, x_2, \ldots, x_n)\in b^n$, and permutations $\sigma\in S_A$ mapping $b$ onto $\{0,1\}$, the following equality holds:
	\begin{equation}\label{eq1}
		g(x_1, x_2, \ldots, x_n)=\sigma^{-1}g\restriction_{\{0,1\}^n}(\sigma(x_1), \sigma(x_2), \ldots, \sigma(x_n)).
	\end{equation}
	Note that for each $n$-ary $2$-function $g\in G$ and sequence $(x_1, x_2, \ldots, x_n)\in A^n_2$ there is a binary $2$-function $g'\in G$ and indices $i,j$, $1\leq i,j\leq n$, for which 
	$$g(\sigma(x_1), \sigma(x_2), \ldots, \sigma(x_n))=g'(\sigma(x_i), \sigma(x_j))$$ for each permutation $\sigma\in S_A$ ($g'$ can be obtained from $g$ by identifying the variables). Since $g'$ is a projection, we have $g'(\sigma(x_i), \sigma(x_j))=\sigma(g'(x_i, x_j))$, and, so:
	\begin{equation}\label{eq2}
		g(\sigma(x_1), \sigma(x_2), \ldots, \sigma(x_n))=\sigma(g(x_1, x_2, \ldots, x_n)).
	\end{equation}
	If $(x_1, x_2, \ldots, x_n)\in b^n$, and $\sigma$ maps $b$ onto $\{0,1\}$, we immediately obtain~(\ref{eq1}). \par Let $h$ be an $n$-ary function from $G_{\{0,1\}}$. So, $h=g\restriction_{\{0,1\}}$ for some $g\in G$. Using (\ref{eq2}), for an arbitrary 
	sequence $(x_1, x_2, \ldots, x_n)\in \{0,1\}^n$ and non-identical permutation $\sigma\in S_{\{0,1\}}$, we have
	$$
	h(x_1, x_2, \ldots, x_n)=\sigma^{-1}(h(\sigma(x_1), \sigma(x_2), \ldots, \sigma(x_n))),
	$$
	i.e., $h$ is self-dual. Therefore, the clone $G_{\{0,1\}}$ consists of self-dual functions. Hence, $G_{\{0,1\}}$ is one of the clones $O_1$, $D_1$, $D_2$, $L_4$. 
\end{proof}
We can now give a proof of Theorem \ref{Th1} (omitting some details that can easily be reconstructed). 
\par The implication $(1)\Rightarrow (2)$ is obvious. 
\par In order to prove the implication $(2)\Rightarrow (3)$, we need Theorem \ref{Th2_clone}.  Let a local rule $f:C^n\to C$ for $\mathfrak C_2(A)$ preserve each set $D$ from a symmetric class $\mathscr D\subseteq \mathscr P(C)$. Consider the clone $\mathcal G$ generated by all rules $f_\sigma$, $\sigma\in S_A$. It is easy to check that $\mathcal G$ is a symmetric clone consisting of local aggregation rules, and $\mathscr D\subseteq \mathrm{Inv}\, f'$ for each $f'\in \mathcal G$. If condition~(\ref{cond1}) of Theorem \ref{Th2_clone} holds for the $2$-clone $G=\widehat{\mathcal G}$, then all rules from $\mathcal G$ (in particular, $f$) are neutral. It remains to prove that $\mathscr D$ is trivial in the opposite case. Let $D\in \mathscr D$, and let $Z$ be the set of all $b\in [A]^2$ such that $d(b)=d'(b)$ for all $d, d'\in D$. Choose a function $d\in D$ and put $D^+=\{c\in C: c\restriction_Z=d\restriction_Z\}$. It is enough to show that $D=D^+$. Let $d$ be an arbitrary function from $D^+$. Let us enumerate the set $[A]^2\setminus Z=\{b_1, b_2, \ldots, b_m\}$. For any number $i$, $1\leq i\leq m$, $D$ contains a function $d_i$ for which $d_i(b_i)=d(b_i)$. In the case under consideration, the clone $G$ satisfies condition~(\ref{cond2}) of Theorem \ref{Th2_clone}. Therefore, as is easy to check, the clone $\mathcal G$ contains an $m$-ary rule $f_1$ such that
$$
\widehat f_1(d_1(b_i), d_2(b_i), \ldots, d_m(b_i))=d(b_i) 
$$
for all $i$, $1\leq i\leq m$. So, $f_1(d_1, d_2, \ldots, d_m)=d$ which implies $D=D^+$.
\par The implication $(3)\Rightarrow (4)$ follows from the fact that the Boolean clones $O_1$, $D_1$, $D_2$ and $L_4$ are generated by the functions $x$, $\overline xy\vee \overline yz\vee yz$, $xy\vee yz\vee xz$, and $x\oplus y\oplus z$, respectively. From this it is easy to deduce that the $2$-clones $O_1^{\Uparrow}(A)$, $D_1^{\Uparrow}(A)$, $D_2^{\Uparrow}(A)$, and $L_4^{\Uparrow}(A)$ are generated by the functions $\widehat{\delta}$, $\widehat{\nu}$, $\widehat{\mu}$ and $\widehat{\lambda}$, respectively. This means that there are only four clones $\mathcal G_\delta$, $\mathcal G_\nu$, $\mathcal G_\mu$ and $\mathcal G_\lambda$ of local and neutral aggregation rules for $\mathfrak C_2(A)$, which are generated by the functions $\delta$, $\nu$, $\mu$ and $\lambda$, respectively. For each local and neutral aggregation rule $f$, we find the smallest (by inclusion) clone $\mathcal G_\chi$, $\chi\in \{\delta, \nu, \mu, \lambda\}$, that contains $f$. Then the clones generated by the rules $f$ and $\chi$ coincide. This means that $f$ and $\chi$ have the same invariants.
\par The implication $(4)\Rightarrow (1)$ follows from Proposition \ref{PropNeutrality}, which completes the proof. 

\section{Concluding remarks}

The universal algebraic approach offers hope for extending the results of Theorem \ref{Th1} to more general situations. However, such research will likely prove significantly more difficult. In particular, for aggregation rules on a set $C$ of choice functions $c: [A]^r\to A$, $r>2$, the analogue of Proposition \ref{Prop3} is not true (for this case, it is known that local aggregation rules do not have symmetric invariants $D\subseteq C$ if $5\leq |A|<\omega$, see \cite{Shelah,Polyakov1}). Another problem is the possible emergence of ``foreign'' combinatorics. For example, Theorem~\ref{Th2_clone} allows us to classify all logical local aggregation rules on symmetric functional sets $C\subseteq A^B$ with the condition $\forall b\in B\, |\{c(b): c\in C\}|\leq 2$ (in addition to the one considered, such a problem may arise, for example, if we study the aggregation of ``choice functions'' on multi- or fuzzy sets). In addition to neutral rules, we obtain here the rules given by Item \ref{cond1} of Theorem~\ref{Th2_clone}, which, apparently, are of little interest in social choice theory. This suggests that the generalization of the obtained results should be accompanied by additional restrictions on the aggregation rules or an expansion of the concept of triviality of a set $D\subseteq C$.

\end{document}